\newcommand{\Rf}{{\mathbb R}}
\newcommand{\Bf}{{\mathbb B}}
\newcommand{\Kn}{{\mathbb K}}
\newcommand{\bls}{\vspace{\baselineskip}}
\newtheorem{theorem}{Theorem}
\newtheorem{remark}{Remark}
\title{\LARGE \bf Optimal and Robust Multivariable Reaching Time Sliding Mode Control Design*}
\author{Jos\'e C. Geromel$^{1}$, Liu Hsu$^{2}$ and Eduardo V. L. Nunes$^{3}$%
\thanks{*This work was supported in part by the National Council for Scientific and Technological Development (CNPq / Brazil), under Grant 302013/2019-9 and Coordination for the Improvement of Higher Education Personnel) (Capes / Brazil), under Grant 001.}
\thanks{$^{1}$Jos\'e C. Geromel is with the School of Electrical and Computer Engineering, UNICAMP, SP-Brazil {\tt\small geromel@dsce.fee.unicamp.br}}%
\thanks{$^{2}$Liu Hsu is with the Department of Electrical Engineering, COPPE, Federal University of Rio de Janeiro, RJ-Brazil {\tt\small lhsu@coppe.ufrj.br}}%
\thanks{$^{3}$Eduardo V. L. Nunes is with the Department of Electrical Engineering, COPPE, Federal University of Rio de Janeiro, RJ-Brazil {\tt\small eduardo.nunes@coppe.ufrj.br}}
}
\begin{document}

\maketitle
\thispagestyle{empty}
\pagestyle{empty}

\begin{abstract}


This paper addresses two minimum reaching time control problems within the context of finite stable systems. The well-known Variable Structure Control (VSC) and Unity Vector Control (UVC) strategies are analyzed, with the primary objective of designing optimal and robust state feedback gains that ensure minimum finite time convergence to the origin. This is achieved in the presence of convex bounded parameter uncertainty and norm-bounded exogenous disturbances. In both cases, the optimality conditions are expressed through Linear Matrix Inequalities (LMIs), which are solved efficiently within the framework of multivariable systems using existing numerical tools. The theoretical results are demonstrated with two practically motivated examples.

\end{abstract}

\section{Introduction}

This paper is a natural follow-up of the recent paper \cite{gero:2023b} that has assessed the reaching time convergence towards the origin of the MSTA control strategy. The LMI-based control design conditions have been established in the former paper \cite{gero:2023}, taking into account robustness against convex bounded parameter uncertainty and exogenous norm bounded disturbance. In the context of finite time convergence and the ability of disturbance rejection of the STA control, the paper \cite{Utkin_tac:2013} presents results of theoretical and practical importance. 

The seminal paper \cite{U:83} offers to the reader a {\it tour d'horizon} about Variable Structure Systems including possible theoretical results that should be developed in the future. This is precisely what we want to present afterwards, by providing reaching time estimation for two classes of finite time control strategies, putting in evidence optimal and robust state feedback control design.

Our interest falls into two well-known classes of finite time systems, namely Variable Structure Control (VSC) \cite{U:1992,UGS:2009} and Unit Vector Control (UVC) \cite{G:79}. First, in an analysis step, the conditions for finite time stability and reaching time estimation are given. For VSC this first step is accomplished by adopting the Persidiskii type Lyapunov function proposed in \cite{Liu_Hsu:2000} whereas for UVC we have considered a simple quadratic function. Second, in the synthesis step, in both cases, conditions expressed by LMIs (involving all scalar parameters, but one) are obtained. This is a key result as far as multivariable systems are concerned. Indeed, the VSC design problem is shown to be jointly convex being thus 
solved with no difficulty. The UVC design needs a line search procedure to determine the global optimal solution with respect to a scalar variable. Again, this is done without difficulty.

This paper treats for the first time, we believe, the minimum reaching time problem in a general setting, including optimality concerning multivariable state feedback gains and robustness with respect to convex bounded parameter uncertainty and norm bounded exogenous disturbance. Moreover, as normally occurs in minimum time problems, the optimal gain matrix tends to be large in order to 
achieve a small reaching time for  the closed-loop system. This undesirable drawback is circumvented by imposing a control norm bound by means of a convex constraint. This is a fact of major importance in practical applications. Hence, the theoretical results are applied to two different examples with practical appeal, namely a second-order Visual Robotics Servoing and a third-order Underwater Remotely Operated Vehicle. 

The paper is organized as follows. In the next section, the problem to be solved is stated and discussed. In section III the Variable Structure Control model is presented and the reaching time is determined by adopting a Persiddiskii type Lyapunov function. In the next section, the same is done to Unit Vector Control by presenting the model and the results relating to reaching time determination by the adoption of a pure quadratic Lyapunov function. Section V is entirely devoted to presenting and discussing two practical examples. Section VI summarizes the conclusions including some recommendations concerning some problems to be faced in order to improve the presented results mainly those related to reaching time optimization in the context of VSC and UVC control design. 
	
\subsection{Notation}
As usual, the symbol $\| \cdot \|$ denotes the Euclidean norm for vectors and the corresponding induced norm for real matrices. For symmetric matrices, $(\bullet)$ denotes the symmetric block. The unit simplex $\Lambda \subset \Rf^N$ stands for the set of all nonnegative vectors with the sum of components equal to one. The convex hull of matrices $\{B_i\}_{i \in \Kn}$, with the same dimensions, is denoted by $\text{co} \{B_i\}_{i \in \Kn}$ and is obtained by the convex combination of the extreme matrices $B_i,~i \in \Kn$, where $\Kn=\{1, 2, \cdots, N\}$ with $N \in \mathbb{N} \setminus \{0\}$. The solutions of the discontinuous differential equations are understood in the Filippov sense \cite{F:88}. For symmetric matrices, ${\bf tr}(\cdot)$ and $\lambda_{max}(\cdot)$ denote the trace and the maximum eigenvalue, respectively. Finally, $I_n$ denotes the $n \times n$ identity matrix. 

\section{Problem Statement}
	
Consider an uncertain multivariable system described by
\begin{align}
\label{eq01}
\dot \sigma = B u + f(t)
\end{align}
where $\sigma \in \mathbb{R}^n$ is the state vector, $u \in \mathbb{R}^m$ is the system input, $\sigma(0)=\sigma_0 \in \Rf^n$ is the initial condition, $B \in \Rf^{n \times m}$ is the uncertain input matrix, and $f(t): \Rf_+ \to \Rf^n$ is the exogenous disturbance. The control law is supposed to be of the general form
\begin{align}
\label{eq02}
u = K \varphi(\sigma) 
\end{align}
where $\varphi : \Rf^n \rightarrow \Rf^n $ is a given function and $K \in \mathbb{R}^{m \times n}$ is the control gain. Moreover, it is assumed that:
\begin{itemize}
\item[i)] The input matrix $B$ has rank $n \leq m$ and $B \in {\mathbb{B}} = { \text{co}}\{B_i\}_{i \in \mathbb{K}}$ with $B_i \in \Rf^{n \times m}, \forall i \in \Kn$. This means that ${\rm rank}(B) = n, \forall B \in \Bf$ and that $B$ is a generic element of $\Bf$, a compact set generated by the convex combinations of the extreme matrices $\{B_i\}_{i \in \mathbb{K}}$. For each $B \in \Bf$, there exists a constant vector $\lambda \in \Lambda$ such that $B = \sum_{i \in \Kn} \lambda_i B_i$.
\item[ii)] The disturbance $f(t)$ is a Lipschitz continuous function. It is not exactly know but its is know that it satisfies the norm bound
\begin{align}
\label{eq03}
\| f(t) \| \leq \delta, \forall t \geq 0
\end{align}
where $\delta$ is a positive constant.
\end{itemize} 

In this paper, our main goal is to design the state feedback matrix gain $K \in \mathbb{R}^{m \times n}$ such that the closed-loop system $\dot \sigma = BK \varphi(\sigma) + f $ is globally finite-time stable with minimum reaching time $T_r(\sigma_0)$. To this end, two classes of control synthesis of the form (\ref{eq02}), fully characterized by the function $\varphi$ to be given afterwards, are considered.

\section{Variable Structure Control (VSC) Design}

In this case, treated in the references \cite{U:1992,UGS:2009}, the closed-loop system model is defined by the well known function
\begin{align} \label{eq04}
\varphi(\sigma) = \left [ \begin{array}{c} {\rm sign}(\sigma_1) \\ \vdots \\ {\rm sign}(\sigma_n) \end{array} \right ]
\end{align}
where the scalar-valued ${\rm sign}$ function is defined as ${\rm sign}(\xi)= \xi / |\xi|$ if $\xi \neq 0$ and ${\rm sign}(0)=0$. Clearly, each component $\varphi_j$ is continuous and differentiable whenever $\sigma_j \neq 0$. Following \cite{Liu_Hsu:2000}, let us adopt the Lyapunov function candidate of Persidiskii type
\begin{align} \label{eq05}
v(\sigma) = \sum_{j=1}^n p_j \int_0^{\sigma_j} {\rm sign}(\xi) d \xi = \sum_{j=1}^n p_j |\sigma_j|
\end{align}
with the positive scalars $p_j>0$ for all $j=1, \cdots, n$ composing the diagonal matrix $P_d = {\rm diag}(p_1, \cdots, p_n)>0$, to be determined. As required, this is a positive definite radially unbounded function.  
\begin{theorem} \label {theo01}
Let $B \in \mathbb{B}$ be given. If there exist a diagonal matrix $0 < P_d \in \Rf^{n \times n}$ and positive scalars $\omega \mu > \delta^2$ such that the inequality
\begin{align} \label{eq06}
K^TB^T P_d + P_d BK + \omega I_n + \mu P_d^2 < 0
\end{align}
holds, then the origin $\sigma = 0$ of the closed-loop system is globally finite-time stable for all exogenous disturbances satisfying (\ref{eq03}). In the affirmative case, the reaching time satisfies $T_r(\sigma_0) \leq 2\mu v(\sigma_0)/(\omega \mu - \delta^2)$.
\end{theorem}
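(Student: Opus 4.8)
The plan is to certify global finite-time stability through the Persidskii-type candidate $v$ in (\ref{eq05}), whose gradient conveniently equals $P_d\varphi(\sigma)$ because the structure of $v$ mirrors the sign nonlinearity that drives the control. First I would differentiate $v$ along the closed-loop trajectory $\dot\sigma = BK\varphi(\sigma) + f$, obtaining, at points where $\sigma$ has no vanishing component,
\begin{align}
\dot v = \varphi^T P_d BK \varphi + \varphi^T P_d f.
\end{align}
Writing the first term through its symmetric part, $\varphi^T P_d BK\varphi = \tfrac12\varphi^T(P_d BK + K^T B^T P_d)\varphi$, lets me substitute the design inequality (\ref{eq06}) directly, which yields the strict bound $2\varphi^T P_d BK\varphi < -\omega\|\varphi\|^2 - \mu\|P_d\varphi\|^2$.

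Next I would dominate the disturbance term. By Cauchy--Schwarz and the norm bound (\ref{eq03}), $\varphi^T P_d f \le \|P_d\varphi\|\,\|f\| \le \delta\|P_d\varphi\|$, so that
\begin{align}
\dot v < -\frac{\omega}{2}\|\varphi\|^2 - \frac{\mu}{2}\|P_d\varphi\|^2 + \delta\|P_d\varphi\|.
\end{align}
The two terms in $\|P_d\varphi\|$ form a concave quadratic maximized at $\|P_d\varphi\| = \delta/\mu$, whence $-\tfrac{\mu}{2}\|P_d\varphi\|^2 + \delta\|P_d\varphi\| \le \delta^2/(2\mu)$; completing this square absorbs the disturbance into a constant residual and eliminates the dependence on $\|P_d\varphi\|$ altogether, leaving $\dot v < -\tfrac{\omega}{2}\|\varphi\|^2 + \delta^2/(2\mu)$.

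The decisive step is to convert this into a uniform negative bound. Whenever $\sigma\neq 0$ at least one coordinate is nonzero, so $\|\varphi\|^2 \ge 1$, giving $\dot v < -(\omega\mu-\delta^2)/(2\mu) =: -c$, which is strictly negative because $\omega\mu > \delta^2$. Integrating this constant decrease rate shows $v(\sigma(t)) \le v(\sigma_0) - ct$, so $v$ reaches zero no later than $v(\sigma_0)/c = 2\mu v(\sigma_0)/(\omega\mu - \delta^2)$, exactly the claimed estimate; radial unboundedness of $v$ makes the conclusion global.

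I expect the main obstacle to be rigor on the switching surfaces, where $\varphi$ is discontinuous and the solution is defined only in the Filippov sense, so $v$ is not classically differentiable. The computation above is valid only off the measure-zero set where some $\sigma_j = 0$. To close the argument I would work with the locally Lipschitz $v$ and its set-valued (Clarke) derivative, verifying that $\dot v < -c$ holds for every element of the generalized Lie derivative along the differential inclusion; the Persidskii form is precisely what makes the gradient selection $P_d\varphi$ compatible with the Filippov selection of the dynamics, so the inequality survives on the switching set. A lighter alternative is to observe that $v$ remains absolutely continuous along any solution and that the derivative bound holds almost everywhere, which already suffices to integrate and obtain the finite reaching time.
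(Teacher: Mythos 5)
Your proposal is correct and follows essentially the same route as the paper: the same Persidskii-type function $v$, the same substitution of (\ref{eq06}) into its derivative, a disturbance bound equivalent to the paper's completion of squares (\ref{eq09}), the key observation $\|\varphi(\sigma)\|^2\ge 1$ for $\sigma\neq 0$, and integration of a uniform negative rate to get $T_r(\sigma_0)\le 2\mu v(\sigma_0)/(\omega\mu-\delta^2)$. Your closing remarks on the switching set correspond to the paper's use of one-sided directional derivatives and its measure-zero argument for the case $\sigma_j=0$, $\dot\sigma_j\neq 0$.
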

\begin{proof}
Consider the $j$-th component of $v$. The one-sided directional derivative has the form \cite{Lasdon_book:1970}
\begin{align} \label{eq07}
D_+|\sigma_j| = \left \{ \begin{array}{ccc} {\rm sign}(\sigma_j) \dot \sigma_j &,& \sigma_j \neq 0 \\ |\dot \sigma_j| &,& \sigma_j = 0 \end{array} \right .
\end{align}
which allows us to determine $D_+v$ by considering two complementary cases. First, assuming that $\sigma \neq 0$ (i.e. all components of $\sigma$ different of zero) we have 
\begin{align} \label{eq08}
D_+v(\sigma) & = \sum_{j=1}^n p_j {\rm sign}(\sigma_j) \dot \sigma_j \nonumber \\
& = {\rm sign}(\sigma)^T P_d \dot \sigma \nonumber \\
& = {\rm sign}(\sigma)^T P_d \big ( B K {\rm sign}(\sigma) + f \big ) 
\end{align} 
which, taking into account that  (by completing the squares)
\begin{align} \label{eq09}
2 {\rm sign}(\sigma)^T P_d f \leq \mu {\rm sign}(\sigma)^T P_d^2 {\rm sign}(\sigma) + \mu^{-1} f^Tf
\end{align} 
and (\ref{eq06}), becomes 
\begin{align} \label{eq10}
D_+v(\sigma) & < (1/2){\rm sign}(\sigma)^T \big ( - \omega I_n - \mu P_d^{2} \big ) {\rm sign}(\sigma) \nonumber \\
& ~~~~ +  {\rm sign}(\sigma)^T P_df \nonumber \\
& \leq -(\omega/2) \|{\rm sign}(\sigma)\|^2 + (1/2 \mu) \|f\|^2 \nonumber \\
& \leq -(1/2) \big ( \omega - \delta^2/\mu \big ) \nonumber \\
& < 0
\end{align} 
where we have taken into account that $\|{\rm sign}(\sigma)\|^2 \geq 1$ and $\omega \mu > \delta^2$. Second, assuming that $\sigma_j=0$, in some time interval then necessarily $\dot \sigma_j =0$ and by consequence the nonzero components make (\ref{eq10}) true. The other possibility $\sigma_j=0$ and $\dot \sigma_j \neq 0$ may occur only in a set of zero measure. Finally, integrating (\ref{eq10}) we obtain the upper bound to the reaching time $T_r(\sigma_0)$. The proof is complete.
\end{proof}

\bls We now proceed by factorizing the inequality (\ref{eq06}) by multiplying it by $\mu/(\omega \mu - \delta^2)>0$ and making the change of variables $P_d \Leftrightarrow \mu P_d/(\omega \mu - \delta^2)$, yielding 
\begin{align} \label{eq11}
K^TB^T P_d + P_d BK + \Big ( 1 + \frac{\delta^2}{\beta} \Big ) I_n  + \beta P_d^2 < 0
\end{align}
where $\beta=\omega \mu  - \delta^2>0$. With respect to the new matrix variable, the reaching time estimation simplifies to $T_r(\sigma_0) \leq 2 v(\sigma_0)$. Finally, it is easily seem that uncertain $B \in \mathbb{B}$ can be considered with no difficulty. Actually, assuming that (\ref{eq11}) holds for $B=B_i, i \in \Kn$, multiplying both sides by $Z_d = P_d^{-1}$, introducing the new matrix variable $Y = K P_d^{-1}$, the Schur Complement provides the equivalent set of LMIs
\begin{align} \label{eq12}
\left [ \begin{array}{ccc} B_iY + Y^T B_i^T + \beta I_n & Z_d & \delta Z_d \\ \bullet  & - I_n & 0 \\ \bullet & \bullet & - \beta I_n\end{array} \right ] < 0,~i \in \Kn
\end{align}

{\bf Reaching time LMI:} Defining the vector $\zeta_0 \in \Rf^n$ with components $\zeta_{0j} = \sqrt{|\sigma_{0j}|}$ for all $j =1, \cdots, n$, we have that $T_r(\sigma_0) \leq 2 \zeta_0'Z_d^{-1} \zeta_0$ and the {\em reaching time LMI}  
\begin{align} \label{eq13}
\left [ \begin{array}{cc} \theta & \zeta_0^T \\ \zeta_0 & Z_d  \end{array} \right ] > 0
\end{align}
allows the determination of the best reaching time estimation, in the present context, provided by a Persidiskii type Lyapunov function, that satisfies $T_r(\sigma_0) \leq 2 \theta$ where
\begin{align} \label{eq14}
\inf_{Z_d, Y, \beta, \theta} \{ \theta : (\ref{eq12})-(\ref{eq13}) \} 
\end{align}
and is imposed by the optimal control matrix gain $K = Y Z_d^{-1}$. An interesting and expected feature of the result reported in Theorem \ref{theo01} is the impact of the disturbance magnitude in the reaching time of the closed-loop system. Indeed, setting $\delta =0$ the exogenous disturbance is eliminated in which case, $\beta>0$ arbitrarily close to zero is the more favorable feasible solution obtained by eliminating the third row and column of the LMIs in (\ref{eq12}). Numerically speaking, the impact on the convex control design problem (\ref{eq14}) is not important because the LMIs become simpler but only a scalar variable is eliminated. 

{\bf Control bounding LMI:} In many instances the optimal solution of (\ref{eq14}) provides a large control gain $K$ associated to a very small reaching time estimation. Fortunately, for a given $\alpha_u >0$, the {\em control bounding LMI} 
\begin{align} \label{eq14bis}
\left [ \begin{array}{cc} \alpha_u^2 I_m & Y \\ Y^T & Z_d \end{array} \right ] > 0 
\end{align}
is equivalent to $Y^TY < \alpha_u^2 Z_d$ which when multiplied both sides by $Z_d^{-1} $ becomes $K^TK < \alpha_u^2 P_d$ and, by consequence, it imposes $\|u(t)\|^2 \leq \alpha_u^2 {\rm sign}(\sigma)^T P_d {\rm sign}(\sigma) \leq \alpha_u^2 {\bf tr}(P_d)$, a constant upper bound on the control magnitude for all $t \geq 0$. For this reason, the convex constraint (\ref{eq14bis}) is to be included in the control design problem (\ref{eq14}). 

\begin{remark} 
It is simple to verify that, for scalar systems ($n=1$) with $BK = - \kappa <0$ such that $\kappa > \delta$, the closed-loop VSC system is finite time stable and the true reaching time is given by $T_*(\sigma_0) = |\sigma_0| / (\kappa - \delta)$. On the other handle, the most favorable $\beta >0$ that minimizes the left hand side of inequality (\ref{eq11}) is $\beta = \delta P_d^{-1}$ and it becomes $-2P_d(\kappa - \delta) + 1 < 0$ which admits the positive solutions $2P_d > 1/(\kappa - \delta)$ provided that $\kappa > \delta$. In this case, the reaching time estimation $T_r(\sigma_0) < 2P_d |\sigma_0|$ determined by the minimum feasible $P_d>0$ equals the true reaching time  $T_*(\sigma_0)$. This is an important property of the Persidiskii type Lyapunov function adopted in this paper for dealing with VSC design since the estimate is not conservative in this simple case. Observe however that $B$ must be known.
\end{remark} 

\section{Unit Vector Control (UVC) Design}

Now, we turn our attention to the Unit Vector Control approach, originally proposed in \cite{G:79,GL:75}. The function $\varphi(\sigma)$ is defined by
\begin{align} \label{eq15}
\varphi(\sigma) = \left ( \frac{1}{\|\sigma\|} \right ) \left [ \begin{array}{c} \sigma_1 \\ \vdots \\ \sigma_n \end{array} \right ]
\end{align}
The first important observation is that a Persidiskii type function does not exist anymore because $\varphi$ is not component-wise decomposable. In order to make possible the use of the results of \cite{gero:2023b}, it seems to be more convenient the adoption of a pure quadratic Lyapunov function candidate of the form $v(\sigma) = \sigma^T P \sigma$ with $0 < P \in \Rf^{n \times n}$. 
\begin{theorem} \label {theo02}
Let $B \in \mathbb{B}$ be given. If there exist symmetric matrices $0 < P \in \Rf^{n \times n}$, $0 < W \in \Rf^{n \times n}$ and positive scalars $\omega$, $\rho$, $\mu$ such that the inequalities
\begin{align} \label{eq16}
K^TB^T P + P BK + \mu^{-1} \delta^2 I_n + \mu P^2 + W < 0
\end{align}
and
\begin{align} \label{eq17}
\left [ \begin{array}{cc} \omega W - \rho P & I_n \\ I_n & \rho I_n \end{array} \right ] > 0 
\end{align}
hold, then the origin $\sigma =0$ of the closed-loop system is globally finite-time stable for all exogenous disturbances satisfying (\ref{eq03}). In the affirmative case, the reaching time satisfies $T_r \leq \omega \sqrt{v(\sigma_0)}$.
\end{theorem}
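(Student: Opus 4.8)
The plan is to mirror the structure of the proof of Theorem \ref{theo01}, but now working with the pure quadratic candidate $v(\sigma)=\sigma^T P\sigma$ and the closed-loop dynamics $\dot\sigma = BK\,\sigma/\|\sigma\| + f$. First I would compute, for $\sigma\neq 0$ (where $\varphi$ is continuous and single-valued),
\[
\dot v = \frac{1}{\|\sigma\|}\,\sigma^T\big(K^TB^TP + PBK\big)\sigma + 2\sigma^T P f .
\]
Thus the control action contributes a term carrying the factor $1/\|\sigma\|$, while the disturbance contributes the unweighted term $2\sigma^T P f$. The central idea is to bound this disturbance term so that it too acquires the factor $1/\|\sigma\|$, allowing it to be matched against the terms $\mu^{-1}\delta^2 I_n + \mu P^2$ that appear in (\ref{eq16}).

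To that end I would complete the square with a \emph{state-dependent} weight. Starting from $2\sigma^T P f \leq 2\|P\sigma\|\,\|f\|$ and applying Young's inequality with weight $\mu/\|\sigma\|$, then using $\|f\|\leq\delta$ and the identity $\sigma^T\sigma/\|\sigma\| = \|\sigma\|$, one obtains
\[
2\sigma^T P f \;\leq\; \frac{1}{\|\sigma\|}\,\sigma^T\big(\mu P^2 + \mu^{-1}\delta^2 I_n\big)\sigma .
\]
Substituting into $\dot v$ and invoking (\ref{eq16}) written as $K^TB^TP + PBK + \mu^{-1}\delta^2 I_n + \mu P^2 < -W$ brings every contribution under the common factor $1/\|\sigma\|$ and yields the intermediate estimate $\dot v < -\,\sigma^T W\sigma/\|\sigma\|$.

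It remains to convert this into a genuine finite-time differential inequality of the form $\dot v < -(2/\omega)\sqrt v$. Here I would exploit (\ref{eq17}): a Schur complement with respect to the block $\rho I_n>0$ shows it is equivalent to $\omega W > \rho P + \rho^{-1}I_n$, whence for $\sigma\neq 0$
\[
\omega\,\sigma^T W\sigma > \rho\,\sigma^T P\sigma + \rho^{-1}\|\sigma\|^2 \geq 2\|\sigma\|\sqrt{\sigma^T P\sigma},
\]
the last step being the scalar arithmetic--geometric mean inequality. Dividing by $\omega\|\sigma\|$ gives $\sigma^T W\sigma/\|\sigma\| > (2/\omega)\sqrt v$, hence $\dot v < -(2/\omega)\sqrt v$. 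Setting $w=\sqrt v$ yields $\dot w < -1/\omega$, so $w$ decreases to zero in time at most $\omega\,w(0)=\omega\sqrt{v(\sigma_0)}$, which is the claimed bound; the discontinuity at $\sigma=0$ and the behaviour on zero-measure sets are treated in the Filippov sense exactly as in Theorem \ref{theo01}.

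I anticipate the disturbance term to be the main obstacle. Unlike the VSC analysis of Theorem \ref{theo01}, where a constant completing-the-squares weight suffices, here the control term already carries the factor $1/\|\sigma\|$, so a constant weight $\mu$ would leave a dimensional mismatch between $-\mu\,\sigma^T P^2\sigma/\|\sigma\|$ and $+\mu\,\sigma^T P^2\sigma$, spoiling the cancellation. Choosing the weight $\mu/\|\sigma\|$ is precisely what reconciles the two contributions and makes (\ref{eq16}) directly applicable; getting this normalization right is the crux of the argument. A secondary point deserving care is the passage from (\ref{eq17}) to the finite-time estimate, which relies on recognizing the $\rho P + \rho^{-1}I_n$ arithmetic--geometric mean structure hidden in the Schur complement.
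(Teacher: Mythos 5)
Your proof is correct and follows essentially the same route as the paper: the same quadratic Lyapunov function, the same $\|\sigma\|$-weighted completion of squares that lets the disturbance term be absorbed into the $\mu^{-1}\delta^2 I_n + \mu P^2$ block of (\ref{eq16}), and the same reduction to $\dot v < -(2/\omega)\sqrt{v}$ followed by integration. The only difference is that where the paper invokes Lemma 1 of \cite{gero:2023b} to pass from (\ref{eq17}) to the bound $\sigma^T W \sigma / \|\sigma\| \geq (2/\omega)\sqrt{\sigma^T P \sigma}$, you derive it directly via the Schur complement $\omega W > \rho P + \rho^{-1} I_n$ and the arithmetic--geometric mean inequality, which makes that step self-contained.
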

\begin{proof}
Considering $\sigma \neq 0$, the time derivative of the Lyapunov function yields
\begin{align} \label{eq18}
\dot v(\sigma)  & = 2\sigma^T P \big ( BK \sigma / \|\sigma\| + f \big ) 
\end{align}
which, together with (\ref{eq16}) and the inequality 
\begin{align} \label{eq19}
2 \sigma^TP \big ( \|\sigma\| f) & \leq \mu \sigma^T P^2 \sigma + \mu^{-1} \|\sigma\|^2 f^T f \nonumber \\
 & \leq \mu \sigma^T P^2 \sigma + \mu^{-1} \delta^2 \|\sigma\|^2
\end{align}
lead to 
\begin{align} \label{eq20}
\dot v(\sigma) & = \big ( 1/\|\sigma\| \big ) \Big ( \sigma^T \big ( K^TB^TP + PBK \big ) \sigma  + 2 \sigma^TP \big ( \|\sigma\| f \big ) \Big ) \nonumber \\
& \leq \big ( 1/\|\sigma\| \big ) \sigma^T \Big (  K^TB^TP + PBK + \mu^{-1} \delta^2 I_n + \mu P^2 \Big ) \sigma \nonumber \\
& < - \big ( 1/\|\sigma\| \big ) \sigma^T W \sigma \nonumber \\
& < 0
\end{align}
proving thus global finite-time stability. Finally, inequality (\ref{eq20}), together with the Lyapunov function candidate, imply that 
\begin{align} \label{eq21}
\dot v  & < - \left ( \frac{\sigma^TW \sigma}{\sigma^T \sigma} \frac{\sigma^TW \sigma}{\sigma^TP \sigma} \right )^{1/2} v^{1/2}   \nonumber \\
& \leq -(2/\omega) v^{1/2} 
\end{align}
where the last inequality follows from the result of Lemma $1$ of \cite{gero:2023b}. The time integration of (\ref{eq21}) yields the reaching time estimation $T_r(\sigma_0) \leq \omega \sqrt{v(\sigma_0)}$. The proof is complete.
\end{proof}

\bls {\bf Reaching time LMI}: Multiplying the inequality (\ref{eq16}) by $\omega^2$ and both sides of  (\ref{eq17}) by ${\rm diag}(\sqrt{\omega} I_n, 1/ \sqrt{\omega}I_n)$, the change of variables $(P, W, \mu, \rho) \Leftrightarrow (\omega^2 P, \omega^2 W, \omega^{-2} \mu, \omega^{-1} \rho)$ shows that inequality (\ref{eq16}) evaluated at the vertices of $\mathbb{B}$ and (\ref{eq17}) are equivalent to
\begin{align} \label{eq22}
K^TB^T P + P BK + \mu^{-1} \delta^2 I_n + \mu P^2 + W < 0,~i \in \Kn
\end{align}
and
\begin{align} \label{eq23}
\left [ \begin{array}{cc}  W - \rho P & I_n \\ I_n & \rho I_n \end{array} \right ] > 0
\end{align}
which when solved provide the reaching time estimation $T_r(\sigma_0) \leq \sqrt{v(\sigma_0)}$. The inequality (\ref{eq23}) imposes $W > \rho P + \rho^{-1}I_n$ which, together with (\ref{eq22}), indicates that the best choice for the matrix variable $W$ is arbitrarily close to that lower bound. Hence, replacing it in (\ref{eq22}), multiplying both sides by $P^{-1}>0$ and taking into account the change of variables $(P^{-1}, KP^{-1}) \Leftrightarrow (Z, Y)$, Schur Complements allow us to determine the set of LMIs associated to each vertex of the uncertain set $\mathbb{B}$, that is  
\begin{align} \label{eq24}
\left [ \begin{array}{ccc} B_iY + Y^TB_i^T + \rho Z + \mu I_n & Z & \delta Z \\ \bullet & -\rho I_n & 0 \\ \bullet & \bullet & - \mu I_n \end{array} \right ] < 0,~i \in \Kn
\end{align}
such that whenever feasible, they provide the reaching time estimation $T_r^2(\sigma_0) \leq \sigma_0^T Z^{-1}\sigma_0$. The reaching time LMI
\begin{align} \label{eq26}
\left [ \begin{array}{cc} \theta & \sigma_0^T \\ \sigma_0 & Z  \end{array} \right ] > 0
\end{align}
allows us to express the reaching time estimation as $T_r(\sigma_0) \leq \sqrt{\theta}$. As a consequence, in the present framework, the best reaching time estimation emerges from the solution of the problem
\begin{align} \label{eq27}
\inf_{\rho} \inf_{Z,Y,\mu,\theta} \left \{ \theta : (\ref{eq24})-(\ref{eq26}) \right \} 
\end{align}
which can be solved with no difficulty since the inner problem, for $\rho>0$ fixed, is a joint convex problem with respect to all involved variables and is expressed by LMIs. The outer problem can be solved by line search with respect to the scalar variable $\rho>0$. The optimal gain is $K=YZ^{-1}$. As it will be seen in the next section, this result is very accurate. Once again, the impact of the disturbance magnitude in the reaching time estimation is apparent. As previously indicated, setting $\delta =0$ the exogenous disturbance is eliminated, in which case $\mu>0$ arbitrarily close to zero is the more favorable feasible solution obtained by eliminating the third row and column of the LMIs in (\ref{eq24}). Numerically speaking, the impact on the control design problem (\ref{eq27}) is negligible mainly because its nature remains unchanged, that is, the scalar variable must be handled by a line search procedure.  

\bls{\bf The control bounding LMI:}
As commented before, the optimal solution of (\ref{eq27}) may provide a large control gain $K$ associated to a very small reaching time estimation. Once again, for a given $\alpha_u >0$, the convex constraint 
\begin{align} \label{eq27bis}
\left [ \begin{array}{cc} \alpha_u^2 I_m & Y \\ Y^T & Z \end{array} \right ] > 0 
\end{align}
is equivalent to $Y^TY < \alpha_u^2 Z$ which when multiplied to the right by $Z^{-1} \sigma/\|\sigma\|$ and to the left by its transpose, gives rise to the constant upper bound on the control magnitude $\|u(t)\|^2 \leq \alpha_u^2 \sigma^T P \sigma / \|\sigma\|^2 \leq \alpha_u^2 \lambda_{max}(P)$, valid for all $t \geq 0$. The convex constraint (\ref{eq27bis}) is to be included in the control design problem (\ref{eq27}). 
\begin{remark} 
As before, for scalar systems ($n=1$) with $BK = - \kappa <0$ such that $\kappa > \delta$, the closed-loop UVC system is finite time stable and the true reaching time is given by $T_*(\sigma_0) = |\sigma_0| / (\kappa - \delta)$. On the other handle, the most favorable $\mu>0$ that minimizes the left hand side of inequality (\ref{eq22}) is $\mu = \delta P^{-1}$. Inequality (\ref{eq23}) is equivalent to $W > \rho P + \rho^{-1}$ which indicates that the most favorable choice to $\rho>0$ is $\rho = P^{-1/2}$. Putting these finds together, inequality (\ref{eq22}) becomes $-2(\kappa - \delta) P + 2 P^{1/2}<0$ which admits the positive solutions $P^{1/2} > 1/(\kappa - \delta)$ provided that $\kappa > \delta$. Hence, the reaching time estimation $T_r(\sigma_0) < P^{1/2} |\sigma_0|$ determined by the minimum feasible $P>0$ equals the true reaching-time $T_*(\sigma_0)$, as in the VSC case. This is an unexpected property of the quadratic Lyapunov function adopted in this paper for dealing with UVC design. 
\end{remark} 

\section{Illustrative examples}

The theoretical results raised so far are now used to solve two illustrative examples borrowed from the literature. The control design problems were numerically solved by Matlab-LmiLab routines and the closed-loop system trajectories have been calculated by Matlab-Simulink routines. For the time simulations we have chosen the Euler first-order integration method with fixed-step size $1e-04$.

\subsection{Visual robotics servoing}

This is a second-order example related to a robotics visual servoing problem borrowed from \cite{MGM:2024} that had been proposed in \cite{KNH:2019} whose uncertain model has been developed in \cite{gero:2023}. It consists of designing a robust controller for a planar kinematic manipulator
whose end-effector image position coordinates $(p_x, p_y)$ are provided by an uncalibrated fixed camera with an optical axis orthogonal to the robot workspace plane, such that the closed-loop system exhibits a finite-time convergence behavior. Defining the state variable $\sigma = [ p_x~p_y]^T \in \Rf^2$, the open-loop system subject to the exogenous disturbance $f(t) = \sqrt{2}[ {\rm sin}(5t)~{\rm sin}(2t)]^T \in \Rf^2 \times \Rf_+$, can be written as (\ref{eq01}) where
\begin{align*}
B(\phi) = \left [ \begin{array}{cc} {\rm cos}(\phi) & {\rm sin}(\phi) \\ - {\rm sin}(\phi) & {\rm cos}(\phi) \end{array} \right ]
\end{align*}
is a matrix depending on the uncertain (uncalibrated) rotation angle $\phi$ corresponding to the camera/workspace transformation. For a given nominal angle $\bar \phi$, and $\Delta \phi = \phi - \bar \phi$ such that $|\Delta \phi | \leq \bar \Delta$ then $B(\phi) = B(\Delta \phi) B(\bar \phi)$. At this point, the key observation is that
\begin{align*}
\left [ \begin{array}{c} {\rm cos}(\Delta \phi) \\ {\rm sin}(\Delta \phi) \end{array} \right ]  \in & {\rm co} \left ( \left [ \begin{array}{c} {\rm cos}(\bar \Delta) \\ {\rm sin}(\bar \Delta) \end{array} \right ], \left [ \begin{array}{c} 1 \\ {\rm sin}(\bar \Delta) \end{array} \right ], \right . \nonumber \\
& \left . ~~~~ \left [ \begin{array}{c} {\rm cos}(\bar \Delta) \\ -{\rm sin}(\bar \Delta) \end{array} \right ], \left [ \begin{array}{c} 1 \\ -{\rm sin}(\bar \Delta) \end{array} \right ] \right )
\end{align*}
for all $|\Delta \phi | \leq \bar \Delta$, provided that $0 \leq \bar \Delta \leq \pi/2~[{\rm rad}]$. Taking this fact into account it is simple to determine four ($N=4$) extreme matrices $B_i \in \Rf^{2 \times 2}, i \in \Kn$ such that
\begin{align*}
B(\phi) \in \Bf = {\rm co}(B_i)_{i \in \Kn},~ \forall |\Delta \phi | \leq \bar \Delta
\end{align*}
All time-simulations have been performed by solving the closed-loop system equation with initial conditions $\sigma(0)=[1~1]^T$. The robust control synthesis problems have been solved with $\bar \phi = \pi/6~[{\rm rad}]$, $\bar \Delta = \pi/4~[{\rm rad}]$ and $\alpha_u = 20$.
\begin{itemize}
\item {\bf No disturbance $(\delta = 0)$:} The VSC state feedback gain and the associated reaching time estimation follow from the optimal solution of problem (\ref{eq14}), that is
\begin{align*}
K_{VSC} \!=\! \left[ \!\! \begin{array}{rr} -5.6848 & 3.2821 \\ -3.2821 & -5.6848 \end{array} \!\! \right  ],~T_r(\sigma_0) \leq 0.4309
\end{align*}
The UVC state feedback gain and the associated reaching time estimation are calculated from the optimal solution of problem (\ref{eq27}) which, for $\rho = 4.0$, yields
\begin{align*}
K_{UVC} \!=\! \left[ \!\! \begin{array}{rr} -4.6108 & 2.6622 \\ -2.6617 & -4.6105 \end{array} \!\! \right ],~T_r(\sigma_0) \leq 0.3764
\end{align*}
As expected, without exogenous disturbance the magnitude of the gains and the closed-loop reaching time estimation are similar. We have determined by time-simulation that the true reaching time is $\approx 0.21$ and $\approx 0.37$, respectively showing that, in this example, the UVC is quite precise. The VSC appears to be more sensitive to parameter uncertainty.

\item {\bf Disturbance with $(\delta = 2.0)$:} The VSC and the UVC control strategies can now cope with the disturbance $f(t)$ given before. From the optimal solution of problem (\ref{eq14}) we obtain
\begin{align*}
K_{VSC} \!=\! \left[ \!\! \begin{array}{rr} -6.6295 & 3.8276 \\ -3.8276 & -6.6295 \end{array} \!\! \right  ],~T_r(\sigma_0) \leq 0.5860
\end{align*}
whereas, the optimal solution of (\ref{eq27}) for $\rho=3.0$ yields
\begin{align*}
K_{UVC} \!=\! \left[ \!\! \begin{array}{rr}  -5.9925 & 3.4599 \\ -3.4595 & -5.9923 \end{array} \!\! \right ],~T_r(\sigma_0) \leq 0.4893
\end{align*}
Observe the impact on the reaching time estimation due to the presence of the exogenous disturbance. The numerical determination of the true reaching time is a complicated task since the given disturbance $f(t)$ is not, in general, the one that produces the maximum reaching time among all feasible disturbances. This problem has been tackled in \cite{gero:2023b}. 
\end{itemize}
\begin{figure}[t]
	\begin{center}
	\includegraphics[width=0.4\textwidth]{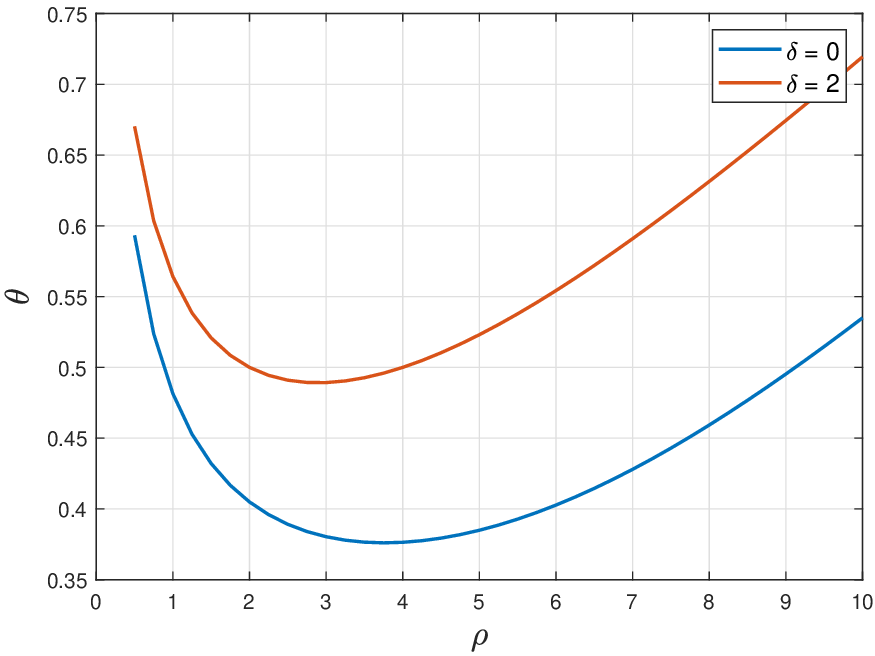}
		\caption{The plot of $\theta$ against $\rho$ for UVC design with the closed-loop system free of exogenous disturbance ($\delta =0$) and subject to exogenous norm bounded disturbance with ($\delta = 2.0$).}
		\label{Fig00}
	\end{center}
\end{figure}

It is interesting to observe the variation of the inner problem objective function in (\ref{eq27}) with respect to the scalar variable $\rho>0$. Figure \ref{Fig00} shows $\theta$ against $\rho$ for $\delta =0$ and $\delta=2$. In both case, they are unimodal with a clear minimum at $\rho\approx 4.0$ and $\rho \approx 3.0$, respectively. Of course, this behavior simplifies the outer minimization indicated in (\ref{eq27}) by the line search procedure. Unfortunately, such an occurrence may not be verified in different problems. 

\subsection{Underwater Remotely Operated Vehicle }

This is an over-actuated ROV (underwater Remotely Operated Vehicle) third order model inspired from a real ROV (LUMA)\footnote{More detail about the ROV can be found in \newline \url{https://insac.eesc.usp.br/luma/} (in Portuguese).}. The state variable is $\sigma = [v_x~v_y~\omega_z]^T \in \Rf^3$ where $v_x$ and $v_y$ are the velocities related to the body frame and $\omega_z$ is the angular velocity related to the $z$-axis. The ROV has four propellers responsible for the displacement of the body through the control variable $u \in \Rf^4$. Neglecting the Coriolis, drag and tether forces \cite{CCH:95}, a simplified ROV model with uncertain parameters developed in \cite{gero:2023b} can be expressed by (\ref{eq01}) with
\begin{align*}
B(g) = M^{-1} \Psi \Pi(g)
\end{align*}
where $M = {\rm diag}(m_0, m_0, I_z)$, with ROV mass $m_0 = 290~[{\rm kg}]$, moment of inertia $I_z = 23~[{\rm kg~ m^2}]$ and
\begin{align*}
\Psi = \left [ \begin{array}{rrrr} \psi_1 & \psi_1 & \psi_1 & \psi_1 \\ \psi_1 & -\psi_1 & -\psi_1 & \psi_1 \\ -\psi_2 & \psi_2 & -\psi_2 & \psi_2 \end{array} \right ]
\end{align*}
where $\psi_1 = \sqrt{2}/2$ and $\psi_2 = 0.35~[{\rm m}]$. The matrix $\Pi(g) = {\rm diag}(g_1, 1, g_3, 1)$ defines the actuator gains. It is assumed that $g_1, g_3 \in [1/2~1]$ are uncertain gains of the first and third control channels, while the other ones operate normally. It is immediate to determine $N=4$ extreme matrices $B_i \in \Rf^{3 \times 4}$ such that $B(g) \in \Bf = {\rm co}\{B_i\}_{i \in \Kn}$ for all feasible pairs $(g_1, g_3)$. Furthermore, we have imposed the initial condition $\sigma(0) = [1~1~\pi/4]^T$. Assuming that the system is disturbance free ($\delta = 0$) and setting $\alpha_u = 10^3$, both robust control synthesis problems have been solved.
\begin{itemize}
\item {\bf VSC design:} The optimal solution of problem (\ref{eq14}) provides the closed-loop reaching time estimation $T_r(\sigma_0) \leq 1.3037$ imposed by the state feedback gain
\begin{align*}
K_{VSC} = \left [ \begin{array}{rrr} 
  -363.6954 & -163.3043 & 90.6665 \\
 -164.1840 & 318.5337 & -193.6286  \\
 -363.6954 & 163.3043 &  90.6665  \\
 -164.1840 & -318.5337 & -193.6286 \end{array} \right ]
\end{align*}
Figure \ref{Fig01} shows the time simulation of the closed-loop system variable $\sigma(t)$ and the mark ''$\bullet$'' indicating the theoretical reaching time estimation. The precision is heavily influenced by the parameter uncertainty.
\begin{figure}[t]
	\begin{center}
		\includegraphics[width=0.4\textwidth]{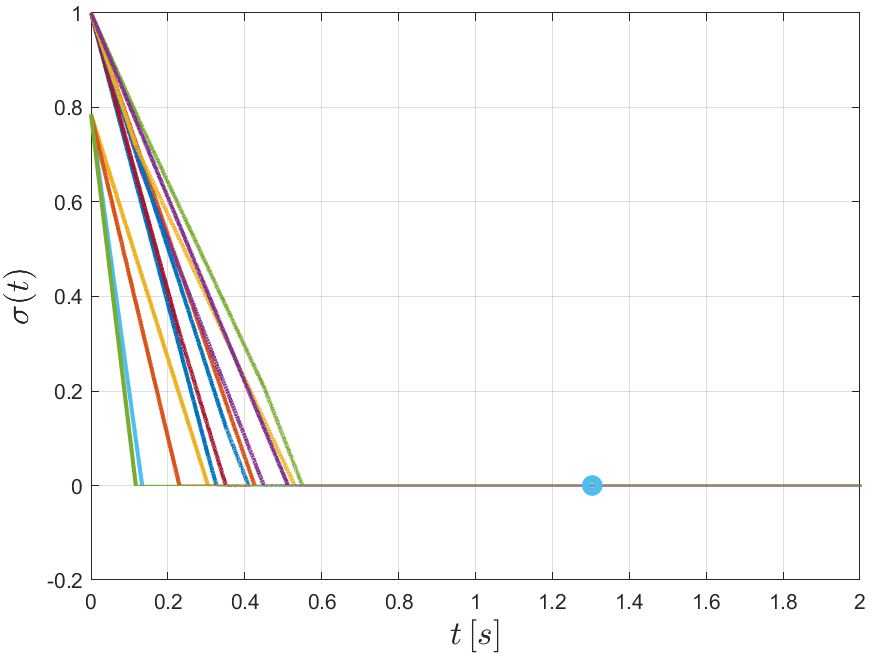}
		\caption{Closed-loop system time evolution of the state variables of the over-actuated ROV under VSC action with input matrix corresponding to each vertex of the convex bounded uncertainty domain $\mathbb{B}$ and free of external disturbance.}
		\label{Fig01}
	\end{center}
\end{figure}
\item {\bf UVC design:} The optimal solution of the problem (\ref{eq27}) obtained for $\rho = 2.0$ provides the reaching time estimation $T_r(\sigma_0) \leq 0.7570$  and the associated state feedback gain 
\begin{align*}
K_{UVC} & = \left [ \begin{array}{rrr} 
  -383.9188 & -102.9356 & 107.3066  \\
 -167.5785 &  308.5779 & -179.9052 \\
 -375.5741 & 244.7841  & 75.9815  \\
 -193.1726 & -368.4072 & -96.1320 \end{array} \right ]
\end{align*}
The time simulation of the closed-loop system given in Figure \ref{Fig02} shows the state variable $\sigma(t)$ and the mark ''$\bullet$'' indicating the theoretical reaching time estimation. Once again, the precision of the UVC is remarkable.
\begin{figure}[t]
	\begin{center}
		\includegraphics[width=0.40\textwidth]{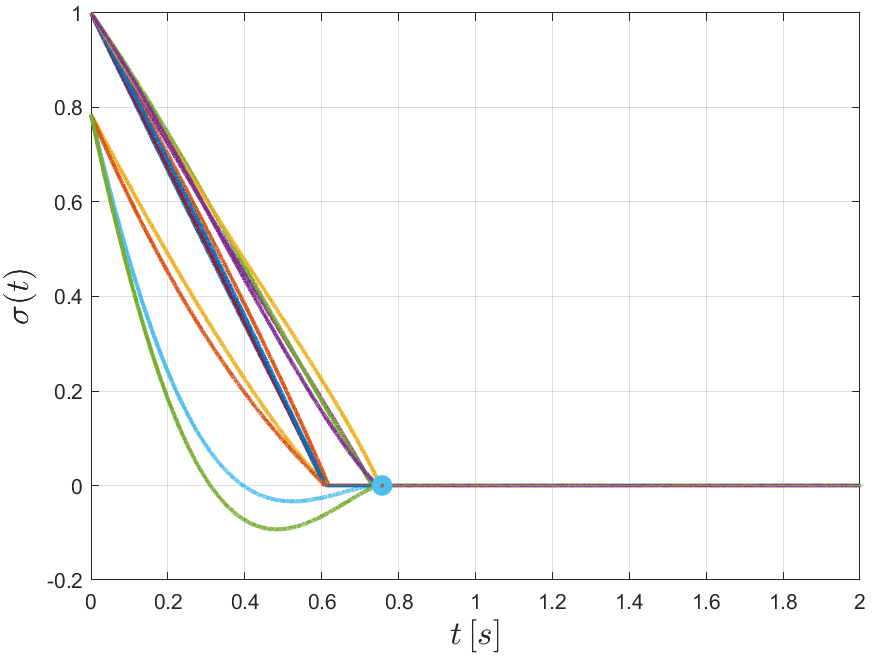}
		\caption{Closed-loop system time evolution of the state variables of the over-actuated ROV under UVC action with input matrix corresponding to each vertex of the convex bounded uncertainty domain $\mathbb{B}$ and free of external disturbance.}
		\label{Fig02}
	\end{center}
\end{figure}
\end{itemize}

\section{Conclusions}

In this paper, we presented new results on the analysis and synthesis of minimum reaching time for two well-known classes of control strategies, assuring finite time convergence of the closed-loop system towards the origin. Each optimal multivariable state feedback gain is robust since parameter uncertainty and norm bounded exogenous disturbance are taken into account. All design conditions are expressed by LMIs, allowing the determination of the optimal control law, with no difficulty, by the machinery available in the literature. The optimal Variable Structure Control was designed with the help of a Persidiskii type Lyapunov function and the examples showed that, in general, its reaching time estimation is sensitive to parameter uncertainty. Besides, the optimal Unit Vector Control law was determined by the adoption of a quadratic Lyapunov function and, for the same examples, it provided very precise reaching time estimations. A natural follow-up of this paper is the development of a new class of Lyapunov function for VSC design by trying to relax the diagonal structure of the involved matrices imposed by the Persidiskii stability theory and the theoretical verification of the surprising quality of the UVC reaching-time estimation produced by a pure quadratic Lyapunov function.

\balance

\section*{Acknowledgment}

The authors would like to thank the National Council for Scientific and Technological Development (CNPq/Brazil) and the Coordination for the Improvement of Higher Education Personnel (CAPES/Brazil) for the partial financial support along the development of this research.

\bibliographystyle{ieeetr}
\bibliography{gscar, MSTA_LMI,BIB,vsmrac, VSS2022_Chattering}

\end{document}